\documentclass[
	aps,prd,
	onecolumn, 
	a4paper, 
	10pt, 
	amsmath,
	amssymb, 
	preprintnumbers,  
	tightenlines, 
	notitlepage,
	floatfix,
	nofootinbib,
	longbibliography
	]{revtex4-1}\pdfoutput=1
\usepackage[utf8]{inputenc}
\usepackage[english]{babel}
\usepackage{bbm}
\usepackage{graphicx}
\usepackage{hyperref}
\usepackage{wasysym}
\usepackage{amsthm}
\usepackage{color}	
\usepackage{tensor}

\graphicspath{{pics/}}

\allowdisplaybreaks[1]

\input{CQGformat.input}
\input{commands.input}

\begin{document}

\title{The mass and angular momentum of reconstructed metric perturbations}
\author{Maarten \surname{van de Meent}}
\address{Mathematical Sciences, University of Southampton, United Kingdom}
\ead{M.vandeMeent@soton.ac.uk}
\date{\today}
\begin{abstract}
We prove a key result regarding the mass and angular momentum content of linear vacuum perturbations of the Kerr metric obtained through the formalism developed by Chrzarnowski, Cohen, and Kegeles (CCK). More precisely we prove that the Abbott-Deser mass and angular momentum integrals of any such perturbation vanish, when that perturbation was obtained from a regular Fourier mode of the Hertz potential. As a corollary we obtain a generalization of previous results on the completion of the `no string' radiation gauge metric perturbation generated by a point particle. We find that for any bound orbit around a Kerr black hole, the mass and angular momentum perturbations completing the CCK metric are simply the energy and angular momentum of the particle ``outside'' the orbit and vanish ``inside'' the orbit.
\end{abstract}

\setlength{\parindent}{0pt} 
\setlength{\parskip}{6pt}

\section{Introduction}
In the 1970s, Chrzarnowski\cite{Chrzanowski:1975wv}, Cohen and Kegeles \cite{Cohen:1974cm,Kegeles:1979an}, and Wald \cite{Wald:1978vm} developed a procedure for obtaining vacuum solutions of the linearized Einstein equation on a Kerr background from a solution of the homogeneous spin $\pm2$ Teukolsky equation. The main advantages of this construction are that the Teukolsky equation is a scalar equation that (unlike the linearized Einstein equation in Kerr) can be separated over harmonic modes into a set of uncoupled ODEs.

As such, the procedure has found many applications in black hole perturbation theory~\cite{Yunes:2005ve,Dotti:2008yr,Amsel:2009ev,Dias:2009ex}. In recent years it has been shown that the CCK procedure --- as it is sometimes called --- can be used to reconstruct the metric perturbation caused by a point particle moving on a geodesic around a Kerr black hole from the Weyl scalars $\psi_0$ or $\psi_4$ sourced by the particle \cite{Ori:2002uv,Lousto:2002em,Keidl:2006wk,Keidl:2010pm,Shah:2010bi,Shah:2012gu,vandeMeent:2015lxa}. This perturbation can then be used to obtain the correction to equations of motion of the particle due to its own gravitational field in the form of the gravitational self force \cite{Pound:2013faa,vandeMeent:2016pee}. In turn, these self-force corrections play a crucial role in a faithful modelling of the dynamics of extreme mass-ratio inspirals (EMRIs), and their waveforms \cite{Hinderer:2008dm}.

A subtlety in reconstructing the metric perturbation from  $\psi_0$ or $\psi_4$ is that this procedure is (fundamentally) ambiguous up to metric perturbations for which  $\psi_0$ or $\psi_4$ vanish. Wald \cite{Wald:1973} showed that up to gauge modes this ambiguity formed solely of perturbations of the background metric within the family of Petrov type-D vacuum spacetimes. Further restricting to global vacuum solutions leaves only perturbations of $\delta M$ and $\delta J$ of the mass and angular momentum of the background within the Kerr family.\footnote{The well known C-metric and Kerr-NUT perturbations all support matter on some section of the symmetry axis of the background.} Since the reconstruction procedure in  \cite{Pound:2013faa,vandeMeent:2016pee} is performed separately in a region ``outside''  the particle orbit (i.e. a region of spacetime reaching from infinity to a timelike hypersurface $\mathcal{S}$ containing the particle worldline) and a region ``inside'' the particle orbit (i.e. a region of spacetime reaching from $\mathcal{S}$ to the horizon of the background metric.), this ambiguity occurs separately in each region leading two four unknown amplitudes $\delta M^\pm$ and $\delta J^\pm$, which need to be fixed using additional input. The problem of finding $\delta M^\pm$ and $\delta J^\pm$ is known as the \emph{completion problem}.

The values of  $\delta M^\pm$ and $\delta J^\pm$ affect the EMRI dynamics in two separate ways. First, they enter into the calculation of the conservative part of the first order self-force. Second, they will provide second order corrections to the flux of energy and angular momentum from the particle. In the two-timescale expansion of the dynamics \cite{Hinderer:2008dm}, both lead to an $\bigO(1)$ correction to the phase of emitted gravitational waveform over the inspiral timescale.

In \cite{Merlin:2016boc} the missing mass and angular momentum perturbations were recovered for particles on bound \emph{equatorial} orbits by requiring continuity of some gauge invariant fields constructed from the metric perturbation. The final result of the lengthy and involved calculation is remarkably simple: ``outside'' the particle's orbit the mass and angular momentum are given by the energy and angular momentum of the orbit, while both vanish ``inside'' the orbit.  As noted in \cite{Merlin:2016boc}, this result implies that in this instance the metric perturbations produced by the CCK procedure do not contain any mass or angular momentum, as measured by the  Abbott-Deser charges \cite{Abbott:1981ff}.

The goal of this paper is to prove this is true in a general sense: the metric perturbation produced through the CCK procedure from any regular harmonic solution of the homogeneous Teukolsky equation always has vanishing Abbott-Deser charges. As an immediate corollary we find that the simple result of \cite{Merlin:2016boc} is true for general bound orbits in Kerr space time, and independent of the variant (ingoing or outgoing) of the radiation gauge used. However, the lemma that is the main result of this paper is of more general applicability. In particular, it is not linked to to context of reconstructing the metric from vacuum perturbations of $\psi_0$ and $\psi_4$, but is applicable to other applications where the CCK procedure is used simply to generate vacuum metric perturbations from solutions of the Teukolsky equation.

The plan for this paper is as follows. In section \ref{sec:prelim} we introduce some of the preliminaries needed for our results. In particular, following \cite{Dolan:2012jg,Merlin:2016boc} we introduce the Abbott-Deser charges which will be used to measure the mass and angular momentum content of a perturbation. Section \ref{sec:lemma} introduces our main technical result in the form of a lemma and continue with its proof. The main consequences of this result are discussed in section \ref{sec:cor}. We conclude with a discussion of our results in section \ref{sec:discuss}.
\section{Preliminaries}\label{sec:prelim}

\subsection{The Kerr metric}
In this paper we study perturbations of the Kerr metric, which in Boyer-Lindquist coordinates is given by
\begin{equation}
\begin{split}
\label{eq:kerr}
g^\mathrm{Kerr}_{\mu\nu} \id{x^\mu}\id{x^\nu}=
&-\bh{1 - \frac{2r}{\Sigma}}\id{t}^2 
+ \frac{\Sigma}{\Delta} \id{r}^2
+ \frac{\Sigma}{1-z^2} \id{z}^2
\\
&+ \frac{1-z^2}{\Sigma} \bh{2a^2 M r (1-z^2)+(a^2+r^2)\Sigma}\id\phi^2
- \frac{4Mar(1-z^2)}{\Sigma}\id{t}\id\phi,
\end{split}
\end{equation}
where $z:=\cos\theta$, $\Delta:=r(r-2M)+a^2$, and $\Sigma:=r^2+a^2z^2$. Moreover, we use geometrized units ($G=c=1$). In (most of) the following we also set $M=1$ for convenience. All notations and conventions are compatible with the appendix of \cite{vandeMeent:2016pee}.

The CCK procedure (discussed below) further is formulated in the Newman-Penrose formalism. For that purpose we chose the following null tetrad (introduced by Kinnersley),
\begin{alignat}{3}
\tet{1}{\mu} &= l^\mu &&= \frac{1}{\Delta}(r^2+a^2,\Delta,0,a),\\
\tet{2}{\mu} &= n^\mu &&= \frac{1}{2\Sigma}(r^2+a^2,-\Delta,0,a),\\
\tet{3}{\mu} &= m^\mu &&= -\frac{\bar\rho \sqrt{1-z^2}}{\sqrt{2}}(\ii a,0,-1,\frac{\ii}{1-z^2}),\\
\tet{4}{\mu} &= \bar{m}^\mu &&= \frac{\rho \sqrt{1-z^2}}{\sqrt{2}}(\ii a,0,1,\frac{\ii}{1-z^2}),
\end{alignat}
where $l^\mu$ and $n^\mu$ are the in- and outgoing principle null vectors of the Kerr background.

\subsection{CCK metric perturbations}
The procedure developed by Chrzarnowski \cite{Chrzanowski:1975wv}, Cohen and Kegeles \cite{Cohen:1974cm,Kegeles:1979an}, and Wald \cite{Wald:1978vm}, also referred to as the ``CCK'' procedure, starts from solutions of the homogeneous Teukolsky equation with spin-weighted $s=\pm2$,
\begin{equation}
\begin{aligned}
\label{eq:kerrteuk} 
\cbB{-\Bh{\frac{(r^2+a^2)^2}{\Delta} -a^2(1-z^2)}\frac{\partial^2}{\partial t^2}
-\frac{4Mar}{\Delta}\frac{\partial^2}{\partial t\partial\phi}
-\Bh{\frac{a^2}{\Delta}-\frac{1}{1-z^2}}\frac{\partial^2}{\partial\phi^2}&
\\
+2s\Bh{\frac{M(r^2-a^2)}{\Delta}-r-iaz}\frac{\partial}{\partial t} 
+2s\Bh{\frac{a(r-M)}{\Delta}+\frac{i z}{1-z^2}}\frac{\partial}{\partial\phi}&
\\
+\Delta^{-s}\frac{\partial}{\partial r}\Bh{\Delta^{s+1}\frac{\partial}{\partial r}}
+\frac{\partial}{\partial z}\Bh{(1-z^2) \frac{\partial}{\partial z}}
-\frac{s(s+1)z^2-s}{1-z^2} &}{_s\Phi} =0.
\end{aligned}
\end{equation}
A vacuum metric perturbation is then constructed using,
\begin{equation}
h^{CCK}_{\mu\nu} := {_{s}\hat{\mathcal{H}}_{\mu\nu}}\circ{_s\Phi} +c.c.,
\end{equation}
where the $ {_{s}\hat{\mathcal{H}}_{\mu\nu}}$ are certain second order partial differential operators given by
\begin{equation}
\begin{split}
{_{+2}\hat{\mathcal{H}}_{\mu\nu}} :=
-\rho^{-4}\cbB{
&\tet{\mu}{1}\tet{\nu}{1}\bh{\bar\delta-3\alpha-\bar\beta+5\varpi}\bh{\bar\delta-4\alpha+\varpi}\\
&+\tet{\mu}{3}\tet{\nu}{3}\bh{\hat\Delta+5\mu-3\gamma+\bar\gamma}\bh{\hat\Delta+\mu-4\gamma}
\\
&-\tet{(\mu}{1}\tet{\nu)}{3}\Bh{
	\bh{\bar\delta-3\alpha+\bar\beta+5\pi+\bar\tau}\bh{\hat\Delta+\mu-4\gamma}\\
&\qquad\qquad	+\bh{\hat\Delta+5\mu-\bar\mu-3\gamma-\bar\gamma}\bh{\bar\delta-4\alpha+\pi}
	}
},
\end{split}
\end{equation}
and
\begin{equation}
\begin{split}
{_{-2}\hat{\mathcal{H}}_{\mu\nu}} := -&\cbB{
\tet{\mu}{2}\tet{\nu}{2}
\Bh{\delta+\bar\alpha +3\beta-\tau }\Bh{\delta+4\beta +3\tau} 
+\tet{\mu}{4}\tet{\nu}{4} \Bh{D-\varrho}\Bh{D +3\varrho}
\\
 & -\tet{(\mu}{2}\tet{\nu)}{4} \Bh{
 \hh{\delta-2\bar\alpha +2\beta-\tau}\hh{ D+3\varrho}
+\hh{D +\bar\varrho-\varrho}\hh{\delta +4\beta+3\tau}
}
}.
\end{split}
\end{equation}
Here $D,\hat\Delta,\delta,\bar\delta$ are the direction derivatives along the tetrad legs as used in the Newman-Penrose formalism. The other Greek symbols are the Newman-Penrose spin-coefficients (see e.g. the appendix of \cite{vandeMeent:2016pee}).

The produced metric perturbation satisfies the radiation gauge conditions. When $s=+2$ the metric satisfies the \emph{outgoing} radiation gauge (ORG) condition $n^\alpha h_{\alpha\beta}=0$, and when $s=-2$ the metric satisfies the \emph{ingoing} radiation gauge (IRG) condition $l^\alpha h_{\alpha\beta}=0$. In both cases the metric is traceless ${h^\alpha}_\alpha=0$.

Despite being a solution of the Teukolsky equation, the field ${_s\Phi}$ does \emph{not} coincide with the Weyl scalar $\psi_0$ or $\psi_4$ (of the appropriate spin-weight) that would be obtained from the corresponding CCK metric perturbation. The fields ${_s\Phi}$ are (in this context) known as Hertz potentials. A general inversion procedure for obtaining the Hertz potential corresponding to some $\psi_0$ or $\psi_4$ exists \cite{Ori:2002uv,vandeMeent:2015lxa}, but its details will not be discussed here.

\subsection{Abbott-Deser charges}
In this paper, following \cite{Dolan:2012jg} and \cite{Merlin:2016boc}, we measure the mass and angular momentum of a perturbation using conserved charges introduced by Abbott and Deser \cite{Abbott:1981ff} that can be defined quasilocally on any metric perturbation when the background has admits a Killing vector. Its advantage in the present context is that it applies to any vacuum perturbation of Kerr spacetime, whereas more conventional notions of mass and angular momentum cannot be applied directly. E.g. the Komar quantities require the existence of Killing vectors on the full (perturbed) metric (which we generically won't have), and the ADM quantities require asymptotic flatness (satisfied only by a subset of CCK perturbations. Nonetheless, in domains of common applicability (Kerr perturbations that are asymptotically flat or share the Killing symmetries of the background) the Abbott-Deser quantities coincide with the ADM and Komar ones \cite{Dolan:2012jg}.

For any metric perturbation $h_{\alpha\beta}$ on a background $g_{\alpha\beta}$ with a Killing vector $k^\mu$, Abbott and Deser \cite{Abbott:1981ff} introduced the anti-symmetric 2-form,
\begin{equation}\label{eq:ADform}
F_{\alpha\beta}[k^\mu]:= \frac{1}{8\pi}\hh{k^\lambda\CD{[\alpha}\bar{h}_{\beta]\lambda}+\bar{h}_{\lambda[\alpha}\CD{\beta]}k^\lambda-k_{[\alpha}\nabla^\lambda\bar{h}_{\beta]\lambda}
},
\end{equation}
where $\bar{h}_{\alpha\beta}:= h_{\alpha\beta}-(1/2)g_{\alpha\beta}{h_{\lambda}}^\lambda$ is the trace reversed metric perturbation. The key property of $F_{\alpha\beta}$ is that its divergence defines a conserved current,
\begin{equation}\label{eq:ADcurrent}
j^\alpha := -\CD{\lambda}F^{\lambda\alpha} = k_\lambda T^{\lambda\alpha},
\end{equation}
where $T_{\alpha\beta}$ is the energy-momentum tensor appearing as a source on the right hand side of the linearized Einstein equation for $h_{\alpha\beta}$. Consequently, $F_{\alpha\beta}$ is divergenceless for vacuum ( $T_{\alpha\beta}=0$) perturbations and can be used to define a topological charge,
\begin{equation}\label{eq:ADcharge}
\mathcal{Q}[h_{\mu\nu},k^\mu,\mathcal{S}] := \int_{\mathcal{S}}F^{\alpha\beta}[k^\mu]\id{\mathcal{S}_{\alpha\beta}}
\end{equation}
for any closed 2-surface $\mathcal{S}$. It was shown by Dolan and Barack \cite{Dolan:2012jg} that the Abbott-Deser charge $\mathcal{Q}$ is in fact invariant under gauge transformations. Moreover, Eq. \eqref{eq:ADcurrent} implies that if the closed surface envelopes a region containing a non-zero matter distribution $T_{\alpha\beta}$, the Abbott-Deser charge is equal to the Noether charge of the matter corresponding to the Killing vector $k^\mu$ \cite{Dolan:2012jg}. This last property is essential for completion problem as it allows one to related the Abbott-Deser charges of the metric perturbation to the energy and angular momentum of the particle source.

\section{Main lemma}\label{sec:lemma}
Using the symmetries of the background Kerr spacetime, solutions 
of the Teukolsky equation can be decomposed
\begin{equation}\label{eq:harm}
{_s\Phi}(t,r,z,\phi) = \int\id\omega\sum_{m} {_s\Phi}_{m\omega}(r,z)e^{i(m\phi-\omega t)},
\end{equation}
where each of the individual harmonic modes ${_s\Phi}_{m\omega}(r,z)e^{i(m\phi-\omega t)}$ satisfies the homogeneous Teukolsky equation and can thus be used as a starting point of the CCK procedure. It is in terms of these modes that we formulate our main lemma.

\begin{lem}[Main lemma]
Let ${_s\Phi}_{m\omega}(r,z)e^{i(m\phi-\omega t)}$ be a smooth solution of the homogeneous $(s=\pm2)$-Teukolsky equation. Then all Abbott-Deser charges of the corresponding vacuum CCK metric, $h_{\alpha\beta}^{CCK}$, vanish.
\end{lem}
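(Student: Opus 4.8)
The plan is to combine two facts: the surface-independence of the charge, and the rigid harmonic dependence of a single mode. Since $h^{CCK}$ is a vacuum perturbation, \eqref{eq:ADcurrent} gives $\CD{\lambda}F^{\lambda\alpha}=0$, so the charge \eqref{eq:ADcharge} depends only on the homology class of $\mathcal{S}$. I would evaluate it on a Boyer--Lindquist coordinate sphere $\mathcal{S}_r$ of constant $t$ and $r$: the induced surface element $\id{\mathcal{S}_{\alpha\beta}}$ is supported on the $t$--$r$ plane, so only the contravariant component $F^{tr}$ contributes, integrated over $z\in[-1,1]$ and $\phi\in[0,2\pi)$. Because $h^{CCK}$ is trace-free one has $\bar{h}_{\alpha\beta}=h_{\alpha\beta}$ in \eqref{eq:ADform}, the two independent Killing vectors $\partial_t,\partial_\phi$ have constant components, and both the Kinnersley spin coefficients and the connection of \eqref{eq:kerr} are independent of $t$ and $\phi$. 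Hence acting with $\hat{\mathcal{H}}_{\mu\nu}$ and assembling $F_{\alpha\beta}$ leaves the factor $e^{i(m\phi-\omega t)}$ intact, so that $F^{tr}=\Phi(r,z)e^{i(m\phi-\omega t)}+\text{c.c.}$ for some $\Phi$.

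With this structure the generic modes fall immediately. The azimuthal integral $\int_0^{2\pi}e^{im\phi}\id\phi=2\pi\delta_{m0}$ annihilates every mode with $m\neq0$. For $m=0$ but $\omega\neq0$ I would invoke time-independence: two spheres $\mathcal{S}_r$ at the same $r$ and different $t$ are the two ends of a timelike cylinder lying entirely in the vacuum region, across which $\CD{\lambda}F^{\lambda\alpha}=0$ forces $\mathcal{Q}$ to take equal values; since the mode supplies an overall factor $e^{-i\omega t}+\text{c.c.}$, the charge must then vanish.

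The surviving, and genuinely hard, case is the static axisymmetric mode $m=\omega=0$, where neither average helps and one must show $\int_{-1}^1\Phi(r,z)\,(\text{area weight})\,\id z=0$ directly. The guiding intuition is multipolar: at $\omega=0$ the angular operator reduces to the spin-weighted Laplacian, so $\,{}_s\Phi_{00}$ expands in $\,{}_sY_{\ell0}$ with $\ell\geq2$ (the minimum for spin weight $\pm2$), while mass and angular momentum are $\ell=0$ and $\ell=1$ quantities and so ``should'' see nothing. I would flag that in Kerr this heuristic is only clean for $a=0$, since the background spin coefficients carry their own $z$-dependence and can mix spherical-harmonic $\ell$. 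My concrete plan is therefore to compute $F^{tr}$ explicitly for $m=\omega=0$, using the radiation-gauge conditions ($n^\alpha h_{\alpha\beta}=0$ or $l^\alpha h_{\alpha\beta}=0$) together with the $\omega=m=0$ Teukolsky equation to recast the $z$-integrand as a total derivative $\partial_z G(r,z)$.

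The integral then collapses to $G|_{z=1}-G|_{z=-1}$, which vanishes once $G$ is regular at the poles --- precisely the content of \emph{smooth solution} on the symmetry axis. I expect the bulk of the work, and the main obstacle, to lie exactly here: producing the total-derivative form demands a careful, gauge-specific manipulation of the second-order operators $\hat{\mathcal{H}}_{\mu\nu}$ in concert with the Teukolsky equation, and one must verify that $G$ (built from $\,{}_s\Phi_{00}$ and finitely many of its derivatives) stays bounded as $z\to\pm1$, so that no axis contribution survives. The earlier symmetry reductions are comparatively routine; it is this static, axisymmetric computation that carries the analytic weight of the lemma.
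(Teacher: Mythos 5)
Your handling of the $m\neq 0$ and $\omega\neq 0$ modes matches the paper's: the azimuthal integral annihilates non-axisymmetric modes, and $t$-independence of the charge (homology across a constant-$r$ cylinder) forces the non-stationary charges to vanish. The gap is in the stationary axisymmetric case, which you rightly identify as carrying all the weight, but for which your plan is missing the structural idea that makes the computation tractable. You propose to show, at an \emph{arbitrary} fixed $r$, that the $z$-integrand of $F^{tr}$ is an exact derivative $\partial_z G(r,z)$ with $G$ regular at the poles. The paper never establishes (and does not need) such an identity at finite $r$. Instead it first decomposes a general SAS solution as ${_s\Phi}_{SAS}={_s\Phi}_{SAS}^{+}+{_s\Phi}_{SAS}^{-}$, where the two pieces satisfy regular boundary conditions at infinity and at the horizon respectively (e.g.\ ${_{-2}\Phi}_{SAS}^{-}\sim\Delta^2$ and ${_{+2}\Phi}_{SAS}^{-}=\bigO(1)$ as $r\to r_{+}$, with prescribed decay at infinity for the $+$ pieces), and then uses the $r$-independence of $\mathcal{Q}$ a \emph{second} time to push the evaluation surface to the horizon for the $-$ piece and to infinity for the $+$ piece. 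In those asymptotic regimes the computation collapses: near infinity the integrand simply decays as $r^{-1}$ and no total-derivative structure is needed at all, while near the horizon the leading term is $\frac{d}{dz}\bigl(G(z)\,{_{s}\Phi_0^{-}}(z)\bigr)+\bigO(\Delta)$ with $G(z)\propto z(1-z^2)/\bigl(2z^2+r_{+}(1-z^2)\bigr)^2$ manifestly vanishing at $z=\pm1$, so smoothness of the mode at the poles finishes the argument.

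Without this split your plan faces two concrete obstacles. First, a generic SAS solution of the Teukolsky equation neither decays at infinity nor is regular at the horizon, so there is no single asymptotic regime in which to work, and you are committed to proving an exact total-derivative identity at every $r$ --- a far stronger statement than the paper proves, and one for which you offer no mechanism (your multipole heuristic, as you yourself note, does not survive the $a\neq0$ mixing). Second, even granting such an identity, you would need to control the pole behaviour of $G(r,z)$ for all $r$, whereas the paper only needs boundedness of the single leading horizon coefficient ${_{s}\Phi_0^{-}}(z)$. I would restructure the SAS case around the $\pm$ decomposition and the limits $r\to r_{+}$ and $r\to\infty$; the remainder of your argument can stand as written.
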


\begin{proof}
Since we are dealing with vacuum metric perturbations, the Abbott-Deser charges are topological invariants of closed 2-surfaces. Hence we are free to choose our surfaces to be spheres $\mathcal{S}_{tr}$ at constant $t$ and $r$. This means that the integrals for the Abbott-Deser charges can be written,
\begin{equation}\label{eq:fixedintegral}
\mathcal{Q}[h_{\mu\nu},k^\mu,\mathcal{S}_{tr}]=\int\limits_{0}^{2\pi}\int\limits_{-1}^{1}\mathcal{F}[k^\mu] \id{z}\id{\phi},
\end{equation}
with
\begin{equation}
\mathcal{F}[k^\mu]:= \Sigma F^{rt}.
\end{equation}
We proceed by distinguishing three separate cases.
\paragraph{Non-stationary modes}
We first consider modes with $\omega\neq 0$. Since the CCK operators are partial differential operators that do not depend on $t$ explicitly, the metric (before taking the real part) will be proportional to $\exp(i\omega t)$. Since the construction of the AD-charge $\mathcal{Q}$ for $\mathcal{S}_{tr}$ also does not involve $t$ explicitly, this implies that $\mathcal{Q}$ is proportional $\exp(i\omega t)$. Since $\mathcal{Q}$ has to be independent of $t$, this can only be true if $\mathcal{Q}=0$. 

\paragraph{Non-axisymmetric modes}
For the second case we consider modes with $m\neq 0$. Since the CCK operators also do not depend on $\phi$ explicitly this means that $\mathcal{F}\propto \exp(i m \phi)$. Consequently, the integral in \eqref{eq:fixedintegral} vanishes.

\paragraph{Stationary and axisymmetric (SAS) modes}
We are left with the case $\omega=m=0$. As discussed in e.g. \cite{vandeMeent:2015lxa}, any stationary axisymmetric (SAS) solution can be written as a sum
\begin{equation}
{_s\Phi}_{SAS}(r,z) ={_s\Phi}_{SAS}^{+}(r,z) + {_s\Phi}_{SAS}^{-}(r,z),
\end{equation}
where ${_s\Phi}_{SAS}^{-}(r,z)$ satisfies regular boundary conditions at the horizon \cite{Barack:1999st},
\begin{align}
{_{-2}\Phi_{SAS}^{-}}(r,z) &= {_{-2}\Phi_0^{-}}(z)\Delta^2+\bigO(\Delta^3) &&\text{as }r\to r_{+}\\
{_{+2}\Phi^{-}_{SAS}}(r,z) &= {_{+2}\Phi_0^{-}}(z)+\bigO(\Delta) &&\text{as }r\to r_{+}
\end{align}
and ${_s\Phi}_{SAS}^{+}(r,z)$ satisfies regular boundary conditions at infinity,
\begin{align}
{_{-2}\Phi_{SAS}^{+}}(r,z) &= \frac{{_{-2}\Phi_0^{+}}(z)}{r^{-1}}+\bigO(r^{-2}) &&\text{as }r\to\infty\\
{_{+2}\Phi^{+}_{SAS}}(r,z) &= \frac{{_{+2}\Phi_0^{+}}(z)}{r^{-5}}+\bigO(r^{-6}) &&\text{as }r\to\infty.
\end{align}
We now calculate the Abbott-Deser charges of each component separately, starting with ${_s\Phi}_{SAS}^{-}(r,z)$. We first write a general Killing vector on Kerr spacetime as,
\begin{equation}\label{eq:killing}
k^\mu := (x,0,0,y),
\end{equation}
by explicit calculation and expanding near the horizon we find that $\mathcal{F}$ is given by
\begin{equation}
\mathcal{F}[k^\mu] \propto
 (2ay-r_{-} x)\Bh{\d{G(z){_{s}\Phi_0^{-}}(z)}{z}+\bigO(\Delta)}
\end{equation}
with
\begin{equation}
G(z) =\frac{-4(1-a^2)r_{+}z(1-z^2)}{\hh{2z^2 + r_{+}(1-z^2)}^2},
\end{equation}
where $r_{\pm}$ are the outer and inner horizon radii of the background Kerr spacetime, and the spin $s$ only affects to proportionality factor.

Consequently, assuming that the mode is smooth and therefore finite at the poles $z=\pm 1$ we conclude that
\begin{equation}
\mathcal{Q}=\int_{S^2} F^{\alpha\beta}[k^\mu]\id{\Sigma_{\alpha\beta}} = \bigO(\Delta).
\end{equation}
Since $\mathcal{Q}$ vanishes at the horizon it must vanish everywhere.

The procedure is similar for  ${_s\Phi}_{SAS}^{+}(r,z)$, where explicit calculation finds that
\begin{equation}
\mathcal{F}\propto r^{-1}
\end{equation}
near infinity, and consequently $\mathcal{Q}$ vanishes (as was already noted in \cite{Merlin:2016boc}).

We thus find the Abbott-Deser charges must vanish for any CCK metric perturbation constructed from a smooth Teukolsky mode.
\end{proof}\rmfamily

\section{Consequences}\label{sec:cor}
The main lemma implies that the Abbott-Deser charges must vanish for a broad class of Hertz potentials. This essentially includes all solutions of the Teukolsky equation for which the Fourier transform in $t$ and $\phi$ exists, and that are smooth enough such that any Fourier sums/integrals can be exchanged with the integrals of the Abbott-Deser flux.

However, the proof certainly does not cover all possible Hertz potentials. The form of the proof further suggests that the best place to look for counterexamples would be in stationary axisymmetric (SAS) modes that are singular on the symmetry axes $z=\pm1$. In the case $a=0$, Keidl et al. \cite{Keidl:2006wk} identified a number of such type-D solutions, some of which were identified as ``mass'' or ''angular momentum''  perturbations of the background.

In \ref{app:SASkernel} we repeat the calculation of Keidl et al. \cite{Keidl:2006wk} for general $a\neq 0$. Like \cite{Keidl:2006wk} we find an eight dimensional family of solutions. All solutions in this family have non-vanishing Abbott-Deser integrals. However, on closer examination this is due to the solutions being sourced by a non-vanishing energy-momentum distribution supported on the symmetry axis. On re-examination the same turns out to be true of the ``mass perturbations'' found in \cite{Keidl:2006wk}. The ``angular momentum perturbation'' from  \cite{Keidl:2006wk} however turns out to be a proper vacuum perturbation. It is given by
\begin{equation}
{_{-2}\Phi_{\delta J}} = i\frac{z(z^2-3)}{1-z^2}\delta J.
\end{equation}
As we guessed it is a Hertz potential that is singular at the poles $z=\pm1$. With the Killing vector as in \eqref{eq:killing}, the Abbott-Deser charge is
\begin{equation}
\mathcal{Q} =  y \delta J.
\end{equation}
This provides an effective counter example to the tempting conjecture that all metric perturbations constructed using the CCK formalism have vanishing Abbott-Deser charges; at least when $a=0$. For the spinning case the conjecture is in principle open, since there is a still larger class of potential Hertz potentials that we have not covered. (For example functions with a polynomial time dependence.)

Nonetheless, the class of Hertz potentials for which the Abbott-Deser charges vanish is wide enough to include any Hertz potential that would appear as the result of the inversion process utilized in \cite{vandeMeent:2015lxa,vandeMeent:2016pee} to obtain the radiation gauge metric perturbation generated by a point particle.

These papers use the `no string' formulation of the radiation gauge introduced by Pound et al. \cite{Pound:2013faa}, where the spacetime is divided in two halves by a hypersurface $\mathcal{S}$ containing the particle worldline and that separates the black hole horizon from (spatial) infinity. In each of these halves the metric perturbation is obtained using the CCK formalism from a Hertz potential that is regular at either the horizon or infinity. Consequently \cite{Wald:1973}, this procedure is ambiguous up to perturbations of the black hole mass and angular momentum (and possibly additional gauge terms). That is, in each half the reconstructed metric needs to be supplemented with a perturbation of the form
\begin{equation}
 \delta{M}^\pm \pd{g^\mathrm{Kerr}_{\mu\nu}}{M}+ \delta{J}^\pm \pd{g^\mathrm{Kerr}_{\mu\nu}}{J},
\end{equation}
where the partial derivative are to be taken with the mass $M$ and angular momentum $J=Ma$ held fixed, and the $\pm$ indicate the regions `outside' and `inside' $\mathcal{S}$ (and the particle's orbit). Finding the values of  $\delta{M}^\pm$ and  $\delta{J}^\pm$ is sometimes known as the completion problem.

In \cite{Merlin:2016boc}, a lengthy calculation involving the matching of gauge invariant fields from both sides of $\mathcal{S}$ was used to prove that for any bound orbit restricted to the equatorial plane of the black hole,
\begin{equation}\label{eq:completion}
\begin{aligned}
	\delta{M}^{+} &= E,	& 	\delta{M}^{-} &= 0,\\
	\delta{J}^{+} &= L,	& 	\delta{J}^{-} &= 0,
\end{aligned}
\end{equation} 
where $E$ and $L$ are the energy and (component parallel to total angular momentum of) the angular momentum of the particle. In section VI of \cite{Merlin:2016boc} it was further observed that since the Abbott-Deser mass of  $\pd{g^\mathrm{Kerr}_{\mu\nu}}{M}$ is 1 and its Abbott-Deser angular momentum vanishes and vice versa for $\pd{g^\mathrm{Kerr}_{\mu\nu}}{J}$, their result implied that the ``reconstructed part'' of the metric perturbation (in that particular scenario) had zero Abbott-Deser mass and angular momentum and conversely that proving so would be sufficient to determine the completion.

Since the Hertz potential in the `inside' and 'outside' regions (by construction) is always decomposable in smooth harmonics as in \eqref{eq:harm}, our main lemma implies that the reconstructed part of the metric perturbation always has vanishing Abbott-Deser mass and angular momentum, for \emph{any} bound orbit and regardless of the spin of the Hertz potential.

We thus obtain the following corollary to our main lemma, generalizing the main result of \cite{Merlin:2016boc}.
\begin{cor}[Completion amplitudes]
For a particle on any bound orbit around a Kerr black hole the `no string' radiation gauge metric perturbation, the completion amplitudes are given by Eq. \eqref{eq:completion}.
\end{cor}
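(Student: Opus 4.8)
The plan is to reduce the corollary to the main lemma by exploiting two facts already recorded in the preliminaries: the Abbott-Deser charge $\mathcal{Q}$ is linear in the metric perturbation, and --- via \eqref{eq:ADcurrent} --- the charge evaluated on a closed surface equals the Noether charge of the perturbing matter the surface encloses. In each of the two regions the `no string' perturbation decomposes as
\begin{equation}
h_{\mu\nu}^{\pm} = h_{\mu\nu}^{CCK,\pm} + \delta M^{\pm}\,\pd{g^\mathrm{Kerr}_{\mu\nu}}{M} + \delta J^{\pm}\,\pd{g^\mathrm{Kerr}_{\mu\nu}}{J},
\end{equation}
where $h^{CCK,\pm}_{\mu\nu}$ is reconstructed via the CCK procedure from a Hertz potential that is regular at infinity (region $+$) or at the horizon (region $-$). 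By construction these potentials admit the smooth harmonic decomposition \eqref{eq:harm}, so the main lemma applies and gives $\mathcal{Q}[h^{CCK,\pm}_{\mu\nu},k^\mu,\mathcal{S}]=0$ for every background Killing vector $k^\mu$ and every closed surface $\mathcal{S}$.

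First I would fix the charges of the two completion perturbations. Taking $k^\mu=\partial_t$ to read off the mass and $k^\mu=\partial_\phi$ to read off the angular momentum, the facts quoted from \cite{Merlin:2016boc,Dolan:2012jg} are that $\pd{g^\mathrm{Kerr}_{\mu\nu}}{M}$ has unit Abbott-Deser mass and zero angular momentum, while $\pd{g^\mathrm{Kerr}_{\mu\nu}}{J}$ has zero mass and unit angular momentum. Combining this with linearity and the vanishing of the reconstructed contribution, the charge of the full perturbation on any sphere $\mathcal{S}_{tr}$ in region $\pm$ collapses to
\begin{equation}
\mathcal{Q}[h^{\pm}_{\mu\nu},\partial_t,\mathcal{S}_{tr}]=\delta M^{\pm},\qquad
\mathcal{Q}[h^{\pm}_{\mu\nu},\partial_\phi,\mathcal{S}_{tr}]=\delta J^{\pm}.
\end{equation}

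Next I would evaluate the same charges physically. Both regions are vacuum away from the worldline on $\mathcal{S}$, so each $\mathcal{Q}$ is a topological invariant within its region and, by \eqref{eq:ADcurrent}, equals the Noether charge of the perturbing matter it encloses. A sphere lying outside the entire orbit (region $+$) encloses the particle, so its charges are the particle's conserved energy $E$ and (axial) angular momentum $L$; a sphere lying between the horizon and the orbit (region $-$) encloses no perturbing matter, so both its charges vanish. Setting these equal to the completion expressions above yields $\delta M^{+}=E$, $\delta J^{+}=L$, $\delta M^{-}=0$, $\delta J^{-}=0$, which is precisely \eqref{eq:completion}. Since the argument never used the orientation of the orbit nor the sign of the spin $s$, it holds for an arbitrary bound orbit and for either the ingoing or outgoing radiation gauge, generalizing \cite{Merlin:2016boc}.

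The substance of the proof is carried entirely by the main lemma; the remaining work, and the main obstacle, is to verify that its hypotheses genuinely hold for the reconstructed pieces produced by the inversion procedure of \cite{vandeMeent:2015lxa,vandeMeent:2016pee} --- that is, that the resulting Hertz potentials are smooth and that the harmonic sum/integral in \eqref{eq:harm} may be interchanged with the Abbott-Deser surface integral. A secondary care point is bookkeeping: one must track the normalisation of $\partial_\phi$ relative to $L$ and confirm that the reconstruction in each region reproduces the physical perturbation there up to a gauge transformation, so that the gauge invariance of $\mathcal{Q}$ established in \cite{Dolan:2012jg} legitimises reading the physical enclosed charge off the region-specific metric.
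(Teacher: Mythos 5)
Your proposal is correct and follows essentially the same route as the paper: apply the main lemma to the reconstructed (CCK) pieces, use the unit Abbott--Deser charges of $\pd{g^\mathrm{Kerr}_{\mu\nu}}{M}$ and $\pd{g^\mathrm{Kerr}_{\mu\nu}}{J}$ together with linearity, and fix the amplitudes by equating the total charges to the Noether charges of the enclosed matter ($E$, $L$ outside; $0$ inside). The paper compresses the final matching step by citing the observation in \cite{Merlin:2016boc} that vanishing charges of the reconstructed part suffice to determine the completion, but the content is identical to what you wrote out.
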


It seems likely that this result also extends to unbound orbits and plunging trajectories, although some care is needed in examining the convergence of the mode-sum at infinity and/or the black hole horizon due to the presence of a distributional point source.

\section{Discussion}\label{sec:discuss}
We have established that the Abbott-Deser charges of any vacuum metric perturbation of Kerr spacetime generated from a regular harmonic mode of the Hertz potential using the CCK formalism, vanish. We thereby (partially) answer a long standing question regarding the mass and angular momentum content of such perturbations. In particular, this allows one to completely recover a vacuum metric perturbation (up to a regular gauge transformation) from the corresponding perturbation of the Weyl scalar $\psi_0$ or $\psi_4$, if one knows the Abbott-Deser mass and angular momentum.

As a corollary we find the mass and angular momentum perturbations needed to complete the `no string' radiation gauge metric perturbation generated by a point particle on any bound orbit around a Kerr black hole. We thereby generalize the previous result of \cite{Merlin:2016boc}, where the same result was obtained for the limited case of equatorial orbits using a much more elaborate calculation. The method set out in \cite{Merlin:2016boc}  nonetheless has value. Besides providing an independent verification of our present result, the method of \cite{Merlin:2016boc} can be extended to help smoothen the gauge modes in the no string radiation gauge \cite{gaugecompletion}, as is needed for some gravitation self-force calculations such as the self-force correction to the periapsis shift \cite{vandeMeent:2016hel}.

The extension of the results of \cite{Merlin:2016boc} to general (inclined) orbits is a key step towards the calculation of the gravitational self-force on such orbits, and thereby the study of the evolution of extreme mass-ratio inspirals; binary black hole systems consisting of a (super)massive black hole orbited by a stellar mass compact object. These so-called EMRIs form a key source of gravitational waves for the proposed space-based gravitational wave observatory LISA \cite{LISAproposal}. In particular, self-force on inclined orbits will be a key ingredient in studying the effect of orbital resonances \cite{Flanagan:2010cd,Flanagan:2012kg,Isoyama:2013yor,vandeMeent:2013sza,vandeMeent:2014raa}.
\section*{Acknowledgements}
The author would like to thank Leor Barack and Adam Pound for many productive discussions on this subject.
The author was supported by the European Research Council under the European Union's Seventh Framework Programme (FP7/2007-2013) ERC grant agreement no. 304978.

\appendix
\section{The SAS kernel}\label{app:SASkernel}
The Weyl scalars $\psi_0$ and $\psi_4$ are constructed from a metric perturbation $h_{\mu\nu}$ by certain second order differential operators. Composing these with the CCK metric construction operators produces a set of fourth order differential operators relating the spin $s=\pm2$ Hertz potentials to the Weyl scalars $\psi_0$ and $\psi_4$. Since $\psi_0$ and $\psi_4$ themselves satisfy the spin $s=\pm2$ Teukolsky equation, these operators map vacuum solutions of the Teukolsky equation into each other. In fact, these operators are (proportional to) the well-known Teukolsky-Starobinksy identities. It is the inversion of these operators that allows one to recover the Hertz potential from a physical Weyl scalar obtained through other means \cite{Ori:2002uv,Lousto:2002em}. Such an inversion will always be ambiguous up to an element of the kernel of these fourth order differential operators.

We here determine the stationary axisymmetric (SAS) component of this kernel. This was previously determined by Keidl et al. \cite{Keidl:2006wk} in the specific case of a Schwarzschild ($a=0$) background and a $s=-2$ (i.e. IRG) Hertz potential. We generalize their result to general Kerr ($a\neq 0$) backgrounds (and include both spins).

In these case of a Hertz potential ${_s\Phi}$ that is stationary and axisymmetric (i.e. independent of $t$ and $\phi$) the relevant fourth order differential equations relating the Hertz potential to $\psi_0$ and $\psi_4$ are
\begin{align}
32\bar\rho^{-4}\bar\psi_4 &= \Delta^2 \partial_r^4 \Delta^2 ({_{+2}\Phi}) &&= 4\bar\eth_{-1}\bar\eth_0\bar\eth_1\bar\eth_2 ({_{-2}\Phi})\label{eq:psi4inv}\\
8\bar\psi_0 &= \eth_{1}\eth_0\eth_{-1}\eth_{-2}({_{+2}\Phi})&&=8\partial_r^4 ({_{-2}\Phi}).\label{eq:psi0inv}
\end{align}
These equations involve either radial or angular derivatives but not both. The kernel is most easily found by setting $\psi_0$ and $\psi_4$ to zero and starting from the ``radial'' equations. Starting with the $s=-2$ case, the general solution is easily seen to be a third order polynomial in $r$ with arbitrary functions in $z$ as coefficients. However,  ${_{-2}\Phi}$ also needs to solve the $s=-2$ Teukolsky equation. Inserting the general solution, the left-hand side of Teukolsky equation is again (proportional to) a third order polynomial in $r$. Hence it can be solved order-by-order in $r$ yielding 4 linear second order ordinary differential equations in $z$ for the arbitrary functions. The most general stationary axisymmetric solution ${_{-2}\Phi_{Ker}}$ of both the homogeneous radial equation and the $s=-2$ Teukolsky equation is part of an 8 complex parameter family, which we find to be given by,
\begin{equation}\label{eq:homsol}
\begin{split}
{_{-2}\Phi_{Ker}}=\frac{1}{1-z^2}\cbB{
&c_1
+ (r-1)z c_2
+ z(z^2-3)c_3
+ \hh{r+(r-1)z^2} c_4
\\
&+ \hh{r^2-a^2}z c_5
+ r\hh{r(r-3)+3 a^2} c_6
+ \hh{r^2 +(r^2-a^2)z^2} c_7
\\
&+ rz\hh{\hh{r(r-3)+3a^2}z^2-3\hh{r(r+4)-4a^2}} c_8
}.
\end{split}
\end{equation}
It is then straightforward to check that this solution also satisfies the ``angular'' part of equation \eqref{eq:psi4inv}. Moreover, one readily checks that these solution coincide with the ones found in \cite{Keidl:2006wk} after setting $a=0$.

However, this is not the full story. The metric reconstruction procedure is only guaranteed to produce a vacuum solution of the Einstein equation on the coordinated patch that is being used. In our case this is the (modified) Boyer-Lindquist coordinate patch on the background Kerr spacetime, i.e. $r>r_{+}$ and $-1<z=\cos\theta<1$. In particular, there is the distinct possibility that these solutions are sourced by energy-momentum supported on the symmetry axes $z=\pm1$. Given that the Hertz potential \eqref{eq:homsol} is irregular on these axes, this seems more than a mere possibility.

If we calculate the Abbott-Deser flux \eqref{eq:ADcharge} through a cylinder enclosing a section of one of the symmetry axes, then, by construction, this is equal to the total mass and/or angular momentum within the cylinder as appearing in the energy-momentum tensor sourcing the metric perturbation. In order to get a proper vacuum perturbation we need to require, these charges to vanish in the limit that the radius of the cylinder is reduced to zero.

Imposing this condition for any section of the symmetry axis and any Killing vector, produces constraints on the coefficients $c_i$. In fact, when $a\neq0$ we find that the axes are only free of mass and angular momentum when all $c_i$ vanish. In other words, the SAS kernel for spinning Kerr spacetimes produces \emph{no} proper vacuum perturbations.

In the special case of a Schwarzschild $a=0$ background, the conditions become
\begin{equation}
\begin{split}
 \re c_2 =  \re c_4 = \re c_7 = \im c_2 &=\im c_4 = 0,\text{ and}\\
 \re(c_5+21 c_8)&=0
\end{split}
\end{equation}
or
 \begin{equation}
\begin{split}
 \re c_1 = \re c_2 = \re c_3 =  \re c_4 = \re c_5 =\re c_7 = \re c_8 &= 0,\text{ and}\\
 \im c_1 =\im c_2 =\im c_3 =\im c_4 &=0.
\end{split}
\end{equation}
In particular, we find that the ``mass perturbations'' identified in \cite{Keidl:2006wk} as the solutions with $\re c_4$ and $\re c_7$ non-zero, are in fact not vacuum perturbations, but have some energy momentum source associated with the symmetry axis. This misidentification is due to the gauge transformations used in the identification in \cite{Keidl:2006wk} being singular on the symmetry axes. We do however recover the angular momentum perturbation, in \cite{Keidl:2006wk} identified as the solution with $\im c_3$ non-zero, as a genuine vacuum perturbation.

The $s=+2$ case can be solved in similar fashion. The Hertz potential modes in the SAS kernel are simply,
\begin{equation}
{_{+2}\Phi_{Ker}} = \Delta^{-2}{_{-2}\Phi_{Ker}},
\end{equation}
as expected from some symmetries of stationary axisymmetric solutions of the Teukolsky equation (see e.g. \cite{vandeMeent:2015lxa}).

\raggedright
\section*{References}
\bibliography{../bib/journalshortnames,../bib/meent,../bib/commongsf,CCpaper}

\begin{thebibliography}{31}%
\makeatletter
\providecommand \@ifxundefined [1]{%
 \@ifx{#1\undefined}
}%
\providecommand \@ifnum [1]{%
 \ifnum #1\expandafter \@firstoftwo
 \else \expandafter \@secondoftwo
 \fi
}%
\providecommand \@ifx [1]{%
 \ifx #1\expandafter \@firstoftwo
 \else \expandafter \@secondoftwo
 \fi
}%
\providecommand \natexlab [1]{#1}%
\providecommand \enquote  [1]{``#1''}%
\providecommand \bibnamefont  [1]{#1}%
\providecommand \bibfnamefont [1]{#1}%
\providecommand \citenamefont [1]{#1}%
\providecommand \href@noop [0]{\@secondoftwo}%
\providecommand \href [0]{\begingroup \@sanitize@url \@href}%
\providecommand \@href[1]{\@@startlink{#1}\@@href}%
\providecommand \@@href[1]{\endgroup#1\@@endlink}%
\providecommand \@sanitize@url [0]{\catcode `\\12\catcode `\$12\catcode
  `\&12\catcode `\#12\catcode `\^12\catcode `\_12\catcode `\%12\relax}%
\providecommand \@@startlink[1]{}%
\providecommand \@@endlink[0]{}%
\providecommand \url  [0]{\begingroup\@sanitize@url \@url }%
\providecommand \@url [1]{\endgroup\@href {#1}{\urlprefix }}%
\providecommand \urlprefix  [0]{URL }%
\providecommand \Eprint [0]{\href }%
\providecommand \doibase [0]{http://dx.doi.org/}%
\providecommand \selectlanguage [0]{\@gobble}%
\providecommand \bibinfo  [0]{\@secondoftwo}%
\providecommand \bibfield  [0]{\@secondoftwo}%
\providecommand \translation [1]{[#1]}%
\providecommand \BibitemOpen [0]{}%
\providecommand \bibitemStop [0]{}%
\providecommand \bibitemNoStop [0]{.\EOS\space}%
\providecommand \EOS [0]{\spacefactor3000\relax}%
\providecommand \BibitemShut  [1]{\csname bibitem#1\endcsname}%
\let\auto@bib@innerbib\@empty
\bibitem [{\citenamefont {Chrzanowski}(1975)}]{Chrzanowski:1975wv}%
  \BibitemOpen
  \bibfield  {author} {\bibinfo {author} {\bibfnamefont {P.L.}\ \bibnamefont
  {Chrzanowski}},\ }\bibfield  {title} {\enquote {\bibinfo {title} {{Vector
  Potential and Metric Perturbations of a Rotating Black Hole}},}\ }\href
  {\doibase 10.1103/PhysRevD.11.2042} {\bibfield  {journal} {\bibinfo
  {journal} {Phys. Rev. D}\ }\textbf {\bibinfo {volume} {11}},\ \bibinfo
  {pages} {2042--2062} (\bibinfo {year} {1975})}\BibitemShut {NoStop}%
\bibitem [{\citenamefont {Cohen}\ and\ \citenamefont
  {Kegeles}(1974)}]{Cohen:1974cm}%
  \BibitemOpen
  \bibfield  {author} {\bibinfo {author} {\bibfnamefont {J.M.}\ \bibnamefont
  {Cohen}}\ and\ \bibinfo {author} {\bibfnamefont {L.S.}\ \bibnamefont
  {Kegeles}},\ }\bibfield  {title} {\enquote {\bibinfo {title}
  {{Electromagnetic fields in curved spaces - a constructive procedure}},}\
  }\href {\doibase 10.1103/PhysRevD.10.1070} {\bibfield  {journal} {\bibinfo
  {journal} {Phys. Rev. D}\ }\textbf {\bibinfo {volume} {10}},\ \bibinfo
  {pages} {1070--1084} (\bibinfo {year} {1974})}\BibitemShut {NoStop}%
\bibitem [{\citenamefont {Kegeles}\ and\ \citenamefont
  {Cohen}(1979)}]{Kegeles:1979an}%
  \BibitemOpen
  \bibfield  {author} {\bibinfo {author} {\bibfnamefont {L.S.}\ \bibnamefont
  {Kegeles}}\ and\ \bibinfo {author} {\bibfnamefont {J.M.}\ \bibnamefont
  {Cohen}},\ }\bibfield  {title} {\enquote {\bibinfo {title} {Constructive
  procedure for perturbations of space-times},}\ }\href {\doibase
  10.1103/PhysRevD.19.1641} {\bibfield  {journal} {\bibinfo  {journal} {Phys.
  Rev. D}\ }\textbf {\bibinfo {volume} {19}},\ \bibinfo {pages} {1641--1664}
  (\bibinfo {year} {1979})}\BibitemShut {NoStop}%
\bibitem [{\citenamefont {Wald}(1978)}]{Wald:1978vm}%
  \BibitemOpen
  \bibfield  {author} {\bibinfo {author} {\bibfnamefont {Robert~M.}\
  \bibnamefont {Wald}},\ }\bibfield  {title} {\enquote {\bibinfo {title}
  {{Construction of Solutions of Gravitational, Electromagnetic, Or Other
  Perturbation Equations from Solutions of Decoupled Equations}},}\ }\href
  {\doibase 10.1103/PhysRevLett.41.203} {\bibfield  {journal} {\bibinfo
  {journal} {Phys. Rev. Lett.}\ }\textbf {\bibinfo {volume} {41}},\ \bibinfo
  {pages} {203--206} (\bibinfo {year} {1978})}\BibitemShut {NoStop}%
\bibitem [{\citenamefont {Yunes}\ and\ \citenamefont
  {Gonzalez}(2006)}]{Yunes:2005ve}%
  \BibitemOpen
  \bibfield  {author} {\bibinfo {author} {\bibfnamefont {Nicolas}\ \bibnamefont
  {Yunes}}\ and\ \bibinfo {author} {\bibfnamefont {Jose}\ \bibnamefont
  {Gonzalez}},\ }\bibfield  {title} {\enquote {\bibinfo {title} {{Metric of a
  tidally perturbed spinning black hole}},}\ }\href {\doibase
  10.1103/PhysRevD.89.089902, 10.1103/PhysRevD.73.024010} {\bibfield  {journal}
  {\bibinfo  {journal} {Phys. Rev.}\ }\textbf {\bibinfo {volume} {D73}},\
  \bibinfo {pages} {024010} (\bibinfo {year} {2006})},\ \bibinfo {note}
  {[Erratum: Phys. Rev.D89,no.8,089902(2014)]},\ \Eprint
  {http://arxiv.org/abs/gr-qc/0510076} {arXiv:gr-qc/0510076 [gr-qc]}
  \BibitemShut {NoStop}%
\bibitem [{\citenamefont {Dotti}\ \emph {et~al.}(2008)\citenamefont {Dotti},
  \citenamefont {Gleiser}, \citenamefont {Ranea-Sandoval},\ and\ \citenamefont
  {Vucetich}}]{Dotti:2008yr}%
  \BibitemOpen
  \bibfield  {author} {\bibinfo {author} {\bibfnamefont {Gustavo}\ \bibnamefont
  {Dotti}}, \bibinfo {author} {\bibfnamefont {Reinaldo~J.}\ \bibnamefont
  {Gleiser}}, \bibinfo {author} {\bibfnamefont {Ignacio~Francisco}\
  \bibnamefont {Ranea-Sandoval}}, \ and\ \bibinfo {author} {\bibfnamefont
  {Hector}\ \bibnamefont {Vucetich}},\ }\bibfield  {title} {\enquote {\bibinfo
  {title} {{Gravitational instabilities in Kerr space times}},}\ }\href
  {\doibase 10.1088/0264-9381/25/24/245012} {\bibfield  {journal} {\bibinfo
  {journal} {Classical Quant. Grav.}\ }\textbf {\bibinfo {volume} {25}},\
  \bibinfo {pages} {245012} (\bibinfo {year} {2008})},\ \Eprint
  {http://arxiv.org/abs/0805.4306} {arXiv:0805.4306 [gr-qc]} \BibitemShut
  {NoStop}%
\bibitem [{\citenamefont {Amsel}\ \emph {et~al.}(2009)\citenamefont {Amsel},
  \citenamefont {Horowitz}, \citenamefont {Marolf},\ and\ \citenamefont
  {Roberts}}]{Amsel:2009ev}%
  \BibitemOpen
  \bibfield  {author} {\bibinfo {author} {\bibfnamefont {Aaron~J.}\
  \bibnamefont {Amsel}}, \bibinfo {author} {\bibfnamefont {Gary~T.}\
  \bibnamefont {Horowitz}}, \bibinfo {author} {\bibfnamefont {Donald}\
  \bibnamefont {Marolf}}, \ and\ \bibinfo {author} {\bibfnamefont {Matthew~M.}\
  \bibnamefont {Roberts}},\ }\bibfield  {title} {\enquote {\bibinfo {title}
  {{No Dynamics in the Extremal Kerr Throat}},}\ }\href {\doibase
  10.1088/1126-6708/2009/09/044} {\bibfield  {journal} {\bibinfo  {journal}
  {JHEP}\ }\textbf {\bibinfo {volume} {09}},\ \bibinfo {pages} {044} (\bibinfo
  {year} {2009})},\ \Eprint {http://arxiv.org/abs/0906.2376} {arXiv:0906.2376
  [hep-th]} \BibitemShut {NoStop}%
\bibitem [{\citenamefont {Dias}\ \emph {et~al.}(2009)\citenamefont {Dias},
  \citenamefont {Reall},\ and\ \citenamefont {Santos}}]{Dias:2009ex}%
  \BibitemOpen
  \bibfield  {author} {\bibinfo {author} {\bibfnamefont {Oscar J.~C.}\
  \bibnamefont {Dias}}, \bibinfo {author} {\bibfnamefont {Harvey~S.}\
  \bibnamefont {Reall}}, \ and\ \bibinfo {author} {\bibfnamefont {Jorge~E.}\
  \bibnamefont {Santos}},\ }\bibfield  {title} {\enquote {\bibinfo {title}
  {{Kerr-CFT and gravitational perturbations}},}\ }\href {\doibase
  10.1088/1126-6708/2009/08/101} {\bibfield  {journal} {\bibinfo  {journal}
  {JHEP}\ }\textbf {\bibinfo {volume} {08}},\ \bibinfo {pages} {101} (\bibinfo
  {year} {2009})},\ \Eprint {http://arxiv.org/abs/0906.2380} {arXiv:0906.2380
  [hep-th]} \BibitemShut {NoStop}%
\bibitem [{\citenamefont {Ori}(2003)}]{Ori:2002uv}%
  \BibitemOpen
  \bibfield  {author} {\bibinfo {author} {\bibfnamefont {Amos}\ \bibnamefont
  {Ori}},\ }\bibfield  {title} {\enquote {\bibinfo {title} {{Reconstruction of
  inhomogeneous metric perturbations and electromagnetic four potential in Kerr
  space-time}},}\ }\href {\doibase 10.1103/PhysRevD.67.124010} {\bibfield
  {journal} {\bibinfo  {journal} {Phys. Rev. D}\ }\textbf {\bibinfo {volume}
  {67}},\ \bibinfo {pages} {124010} (\bibinfo {year} {2003})},\ \Eprint
  {http://arxiv.org/abs/gr-qc/0207045} {arXiv:gr-qc/0207045} \BibitemShut
  {NoStop}%
\bibitem [{\citenamefont {Lousto}\ and\ \citenamefont
  {Whiting}(2002)}]{Lousto:2002em}%
  \BibitemOpen
  \bibfield  {author} {\bibinfo {author} {\bibfnamefont {Carlos~O.}\
  \bibnamefont {Lousto}}\ and\ \bibinfo {author} {\bibfnamefont {Bernard~F.}\
  \bibnamefont {Whiting}},\ }\bibfield  {title} {\enquote {\bibinfo {title}
  {{Reconstruction of black hole metric perturbations from Weyl curvature}},}\
  }\href {\doibase 10.1103/PhysRevD.66.024026} {\bibfield  {journal} {\bibinfo
  {journal} {Phys. Rev. D}\ }\textbf {\bibinfo {volume} {66}},\ \bibinfo
  {pages} {024026} (\bibinfo {year} {2002})},\ \Eprint
  {http://arxiv.org/abs/gr-qc/0203061} {arXiv:gr-qc/0203061} \BibitemShut
  {NoStop}%
\bibitem [{\citenamefont {Keidl}\ \emph {et~al.}(2007)\citenamefont {Keidl},
  \citenamefont {Friedman},\ and\ \citenamefont {Wiseman}}]{Keidl:2006wk}%
  \BibitemOpen
  \bibfield  {author} {\bibinfo {author} {\bibfnamefont {Tobias~S.}\
  \bibnamefont {Keidl}}, \bibinfo {author} {\bibfnamefont {John~L.}\
  \bibnamefont {Friedman}}, \ and\ \bibinfo {author} {\bibfnamefont {Alan~G.}\
  \bibnamefont {Wiseman}},\ }\bibfield  {title} {\enquote {\bibinfo {title}
  {{On finding fields and self-force in a gauge appropriate to separable wave
  equations}},}\ }\href {\doibase 10.1103/PhysRevD.75.124009} {\bibfield
  {journal} {\bibinfo  {journal} {Phys. Rev. D}\ }\textbf {\bibinfo {volume}
  {75}},\ \bibinfo {pages} {124009} (\bibinfo {year} {2007})},\ \Eprint
  {http://arxiv.org/abs/gr-qc/0611072} {arXiv:gr-qc/0611072} \BibitemShut
  {NoStop}%
\bibitem [{\citenamefont {Keidl}\ \emph {et~al.}(2010)\citenamefont {Keidl},
  \citenamefont {Shah}, \citenamefont {Friedman}, \citenamefont {Kim},\ and\
  \citenamefont {Price}}]{Keidl:2010pm}%
  \BibitemOpen
  \bibfield  {author} {\bibinfo {author} {\bibfnamefont {Tobias~S.}\
  \bibnamefont {Keidl}}, \bibinfo {author} {\bibfnamefont {Abhay~G.}\
  \bibnamefont {Shah}}, \bibinfo {author} {\bibfnamefont {John~L.}\
  \bibnamefont {Friedman}}, \bibinfo {author} {\bibfnamefont {Dong-Hoon}\
  \bibnamefont {Kim}}, \ and\ \bibinfo {author} {\bibfnamefont {Larry~R.}\
  \bibnamefont {Price}},\ }\bibfield  {title} {\enquote {\bibinfo {title}
  {{Gravitational Self-force in a Radiation Gauge}},}\ }\href {\doibase
  10.1103/PhysRevD.82.124012, 10.1103/PhysRevD.90.109902} {\bibfield  {journal}
  {\bibinfo  {journal} {Phys. Rev. D}\ }\textbf {\bibinfo {volume} {82}},\
  \bibinfo {pages} {124012} (\bibinfo {year} {2010})},\ \Eprint
  {http://arxiv.org/abs/1004.2276} {arXiv:1004.2276 [gr-qc]} \BibitemShut
  {NoStop}%
\bibitem [{\citenamefont {Shah}\ \emph {et~al.}(2011)\citenamefont {Shah},
  \citenamefont {Keidl}, \citenamefont {Friedman}, \citenamefont {Kim},\ and\
  \citenamefont {Price}}]{Shah:2010bi}%
  \BibitemOpen
  \bibfield  {author} {\bibinfo {author} {\bibfnamefont {Abhay~G.}\
  \bibnamefont {Shah}}, \bibinfo {author} {\bibfnamefont {Tobias~S.}\
  \bibnamefont {Keidl}}, \bibinfo {author} {\bibfnamefont {John~L.}\
  \bibnamefont {Friedman}}, \bibinfo {author} {\bibfnamefont {Dong-Hoon}\
  \bibnamefont {Kim}}, \ and\ \bibinfo {author} {\bibfnamefont {Larry~R.}\
  \bibnamefont {Price}},\ }\bibfield  {title} {\enquote {\bibinfo {title}
  {{Conservative, gravitational self-force for a particle in circular orbit
  around a Schwarzschild black hole in a Radiation Gauge}},}\ }\href {\doibase
  10.1103/PhysRevD.83.064018} {\bibfield  {journal} {\bibinfo  {journal} {Phys.
  Rev. D}\ }\textbf {\bibinfo {volume} {83}},\ \bibinfo {pages} {064018}
  (\bibinfo {year} {2011})},\ \Eprint {http://arxiv.org/abs/1009.4876}
  {arXiv:1009.4876 [gr-qc]} \BibitemShut {NoStop}%
\bibitem [{\citenamefont {Shah}\ \emph {et~al.}(2012)\citenamefont {Shah},
  \citenamefont {Friedman},\ and\ \citenamefont {Keidl}}]{Shah:2012gu}%
  \BibitemOpen
  \bibfield  {author} {\bibinfo {author} {\bibfnamefont {Abhay~G.}\
  \bibnamefont {Shah}}, \bibinfo {author} {\bibfnamefont {John~L.}\
  \bibnamefont {Friedman}}, \ and\ \bibinfo {author} {\bibfnamefont
  {Tobias~S.}\ \bibnamefont {Keidl}},\ }\bibfield  {title} {\enquote {\bibinfo
  {title} {{EMRI corrections to the angular velocity and redshift factor of a
  mass in circular orbit about a Kerr black hole}},}\ }\href {\doibase
  10.1103/PhysRevD.86.084059} {\bibfield  {journal} {\bibinfo  {journal} {Phys.
  Rev. D}\ }\textbf {\bibinfo {volume} {86}},\ \bibinfo {pages} {084059}
  (\bibinfo {year} {2012})},\ \Eprint {http://arxiv.org/abs/1207.5595}
  {arXiv:1207.5595 [gr-qc]} \BibitemShut {NoStop}%
\bibitem [{\citenamefont {van~de Meent}\ and\ \citenamefont
  {Shah}(2015)}]{vandeMeent:2015lxa}%
  \BibitemOpen
  \bibfield  {author} {\bibinfo {author} {\bibfnamefont {Maarten}\ \bibnamefont
  {van~de Meent}}\ and\ \bibinfo {author} {\bibfnamefont {Abhay~G.}\
  \bibnamefont {Shah}},\ }\bibfield  {title} {\enquote {\bibinfo {title}
  {{Metric perturbations produced by eccentric equatorial orbits around a Kerr
  black hole}},}\ }\href {\doibase 10.1103/PhysRevD.92.064025} {\bibfield
  {journal} {\bibinfo  {journal} {Phys. Rev. D}\ }\textbf {\bibinfo {volume}
  {92}},\ \bibinfo {pages} {064025} (\bibinfo {year} {2015})},\ \Eprint
  {http://arxiv.org/abs/1506.04755} {arXiv:1506.04755 [gr-qc]} \BibitemShut
  {NoStop}%
\bibitem [{\citenamefont {Pound}\ \emph {et~al.}(2014)\citenamefont {Pound},
  \citenamefont {Merlin},\ and\ \citenamefont {Barack}}]{Pound:2013faa}%
  \BibitemOpen
  \bibfield  {author} {\bibinfo {author} {\bibfnamefont {Adam}\ \bibnamefont
  {Pound}}, \bibinfo {author} {\bibfnamefont {Cesar}\ \bibnamefont {Merlin}}, \
  and\ \bibinfo {author} {\bibfnamefont {Leor}\ \bibnamefont {Barack}},\
  }\bibfield  {title} {\enquote {\bibinfo {title} {{Gravitational self-force
  from radiation-gauge metric perturbations}},}\ }\href {\doibase
  10.1103/PhysRevD.89.024009} {\bibfield  {journal} {\bibinfo  {journal} {Phys.
  Rev. D}\ }\textbf {\bibinfo {volume} {89}},\ \bibinfo {pages} {024009}
  (\bibinfo {year} {2014})},\ \Eprint {http://arxiv.org/abs/1310.1513}
  {arXiv:1310.1513 [gr-qc]} \BibitemShut {NoStop}%
\bibitem [{\citenamefont {van~de Meent}(2016)}]{vandeMeent:2016pee}%
  \BibitemOpen
  \bibfield  {author} {\bibinfo {author} {\bibfnamefont {Maarten}\ \bibnamefont
  {van~de Meent}},\ }\bibfield  {title} {\enquote {\bibinfo {title}
  {{Gravitational self-force on eccentric equatorial orbits around a Kerr black
  hole}},}\ }\href {\doibase 10.1103/PhysRevD.94.044034} {\bibfield  {journal}
  {\bibinfo  {journal} {Phys. Rev. D}\ }\textbf {\bibinfo {volume} {94}},\
  \bibinfo {pages} {044034} (\bibinfo {year} {2016})},\ \Eprint
  {http://arxiv.org/abs/1606.06297} {arXiv:1606.06297 [gr-qc]} \BibitemShut
  {NoStop}%
\bibitem [{\citenamefont {Hinderer}\ and\ \citenamefont
  {Flanagan}(2008)}]{Hinderer:2008dm}%
  \BibitemOpen
  \bibfield  {author} {\bibinfo {author} {\bibfnamefont {Tanja}\ \bibnamefont
  {Hinderer}}\ and\ \bibinfo {author} {\bibfnamefont {Eanna~E.}\ \bibnamefont
  {Flanagan}},\ }\bibfield  {title} {\enquote {\bibinfo {title} {{Two timescale
  analysis of extreme mass ratio inspirals in Kerr. I. Orbital Motion}},}\
  }\href {\doibase 10.1103/PhysRevD.78.064028} {\bibfield  {journal} {\bibinfo
  {journal} {Phys. Rev. D}\ }\textbf {\bibinfo {volume} {D78}},\ \bibinfo
  {pages} {064028} (\bibinfo {year} {2008})},\ \Eprint
  {http://arxiv.org/abs/0805.3337} {arXiv:0805.3337 [gr-qc]} \BibitemShut
  {NoStop}%
\bibitem [{\citenamefont {Wald}(1973)}]{Wald:1973}%
  \BibitemOpen
  \bibfield  {author} {\bibinfo {author} {\bibfnamefont {Robert~M.}\
  \bibnamefont {Wald}},\ }\bibfield  {title} {\enquote {\bibinfo {title} {On
  perturbations of a kerr black hole},}\ }\href {\doibase 10.1063/1.1666203}
  {\bibfield  {journal} {\bibinfo  {journal} {Journal of Mathematical Physics}\
  }\textbf {\bibinfo {volume} {14}},\ \bibinfo {pages} {1453--1461} (\bibinfo
  {year} {1973})}\BibitemShut {NoStop}%
\bibitem [{\citenamefont {Merlin}\ \emph {et~al.}(2016)\citenamefont {Merlin},
  \citenamefont {Ori}, \citenamefont {Barack}, \citenamefont {Pound},\ and\
  \citenamefont {van~de Meent}}]{Merlin:2016boc}%
  \BibitemOpen
  \bibfield  {author} {\bibinfo {author} {\bibfnamefont {Cesar}\ \bibnamefont
  {Merlin}}, \bibinfo {author} {\bibfnamefont {Amos}\ \bibnamefont {Ori}},
  \bibinfo {author} {\bibfnamefont {Leor}\ \bibnamefont {Barack}}, \bibinfo
  {author} {\bibfnamefont {Adam}\ \bibnamefont {Pound}}, \ and\ \bibinfo
  {author} {\bibfnamefont {Maarten}\ \bibnamefont {van~de Meent}},\ }\bibfield
  {title} {\enquote {\bibinfo {title} {{Completion of metric reconstruction for
  a particle orbiting a Kerr black hole}},}\ }\href {\doibase
  10.1103/PhysRevD.94.104066} {\bibfield  {journal} {\bibinfo  {journal} {Phys.
  Rev. D}\ }\textbf {\bibinfo {volume} {94}},\ \bibinfo {pages} {104066}
  (\bibinfo {year} {2016})},\ \Eprint {http://arxiv.org/abs/1609.01227}
  {arXiv:1609.01227 [gr-qc]} \BibitemShut {NoStop}%
\bibitem [{\citenamefont {Abbott}\ and\ \citenamefont
  {Deser}(1982)}]{Abbott:1981ff}%
  \BibitemOpen
  \bibfield  {author} {\bibinfo {author} {\bibfnamefont {L.~F.}\ \bibnamefont
  {Abbott}}\ and\ \bibinfo {author} {\bibfnamefont {Stanley}\ \bibnamefont
  {Deser}},\ }\bibfield  {title} {\enquote {\bibinfo {title} {{Stability of
  Gravity with a Cosmological Constant}},}\ }\href {\doibase
  10.1016/0550-3213(82)90049-9} {\bibfield  {journal} {\bibinfo  {journal}
  {Nucl. Phys.}\ }\textbf {\bibinfo {volume} {B195}},\ \bibinfo {pages}
  {76--96} (\bibinfo {year} {1982})}\BibitemShut {NoStop}%
\bibitem [{\citenamefont {Dolan}\ and\ \citenamefont
  {Barack}(2013)}]{Dolan:2012jg}%
  \BibitemOpen
  \bibfield  {author} {\bibinfo {author} {\bibfnamefont {Sam~R.}\ \bibnamefont
  {Dolan}}\ and\ \bibinfo {author} {\bibfnamefont {Leor}\ \bibnamefont
  {Barack}},\ }\bibfield  {title} {\enquote {\bibinfo {title} {{Self-force via
  $m$-mode regularization and 2+1D evolution: III. Gravitational field on
  Schwarzschild spacetime}},}\ }\href {\doibase 10.1103/PhysRevD.87.084066}
  {\bibfield  {journal} {\bibinfo  {journal} {Phys. Rev. D}\ }\textbf {\bibinfo
  {volume} {87}},\ \bibinfo {pages} {084066} (\bibinfo {year} {2013})},\
  \Eprint {http://arxiv.org/abs/1211.4586} {arXiv:1211.4586 [gr-qc]}
  \BibitemShut {NoStop}%
\bibitem [{\citenamefont {Barack}(2000)}]{Barack:1999st}%
  \BibitemOpen
  \bibfield  {author} {\bibinfo {author} {\bibfnamefont {Leor}\ \bibnamefont
  {Barack}},\ }\bibfield  {title} {\enquote {\bibinfo {title} {{Late time decay
  of scalar, electromagnetic, and gravitational perturbations outside rotating
  black holes}},}\ }\href {\doibase 10.1103/PhysRevD.61.024026} {\bibfield
  {journal} {\bibinfo  {journal} {Phys. Rev. D}\ }\textbf {\bibinfo {volume}
  {61}},\ \bibinfo {pages} {024026} (\bibinfo {year} {2000})},\ \Eprint
  {http://arxiv.org/abs/gr-qc/9908005} {arXiv:gr-qc/9908005} \BibitemShut
  {NoStop}%
\bibitem [{\citenamefont {Pound}\ \emph {et~al.}()\citenamefont {Pound},
  \citenamefont {van~de Meent},\ and\ \citenamefont {Shah}}]{gaugecompletion}%
  \BibitemOpen
  \bibfield  {author} {\bibinfo {author} {\bibfnamefont {Adam}\ \bibnamefont
  {Pound}}, \bibinfo {author} {\bibfnamefont {Maarten}\ \bibnamefont {van~de
  Meent}}, \ and\ \bibinfo {author} {\bibfnamefont {Abhay~G}\ \bibnamefont
  {Shah}},\ }\href@noop {} {\enquote {\bibinfo {title} {Pseudo-invariants from
  radiation gauge self-force calculations},}\ }\bibinfo {note}
  {(unpublished)}\BibitemShut {NoStop}%
\bibitem [{\citenamefont {van~de Meent}(2017)}]{vandeMeent:2016hel}%
  \BibitemOpen
  \bibfield  {author} {\bibinfo {author} {\bibfnamefont {Maarten}\ \bibnamefont
  {van~de Meent}},\ }\bibfield  {title} {\enquote {\bibinfo {title}
  {{Self-force corrections to the periapsis advance around a spinning black
  hole}},}\ }\href {\doibase 10.1103/PhysRevLett.118.011101} {\bibfield
  {journal} {\bibinfo  {journal} {Phys. Rev. Lett.}\ }\textbf {\bibinfo
  {volume} {118}},\ \bibinfo {pages} {011101} (\bibinfo {year} {2017})},\
  \Eprint {http://arxiv.org/abs/1610.03497} {arXiv:1610.03497 [gr-qc]}
  \BibitemShut {NoStop}%
\bibitem [{\citenamefont {{LISA Consortium}}(2017)}]{LISAproposal}%
  \BibitemOpen
  \bibfield  {author} {\bibinfo {author} {\bibnamefont {{LISA Consortium}}},\
  }\href {https://www.lisamission.org/proposal/LISA.pdf} {\enquote {\bibinfo
  {title} {{LISA}: Laser interferometer space antenna},}\ } (\bibinfo {year}
  {2017})\BibitemShut {NoStop}%
\bibitem [{\citenamefont {Flanagan}\ and\ \citenamefont
  {Hinderer}(2012)}]{Flanagan:2010cd}%
  \BibitemOpen
  \bibfield  {author} {\bibinfo {author} {\bibfnamefont {Eanna~E.}\
  \bibnamefont {Flanagan}}\ and\ \bibinfo {author} {\bibfnamefont {Tanja}\
  \bibnamefont {Hinderer}},\ }\bibfield  {title} {\enquote {\bibinfo {title}
  {{Transient resonances in the inspirals of point particles into black
  holes}},}\ }\href {\doibase 10.1103/PhysRevLett.109.071102} {\bibfield
  {journal} {\bibinfo  {journal} {Phys. Rev. Lett.}\ }\textbf {\bibinfo
  {volume} {109}},\ \bibinfo {pages} {071102} (\bibinfo {year} {2012})},\
  \Eprint {http://arxiv.org/abs/1009.4923} {arXiv:1009.4923 [gr-qc]}
  \BibitemShut {NoStop}%
\bibitem [{\citenamefont {Flanagan}\ \emph {et~al.}(2014)\citenamefont
  {Flanagan}, \citenamefont {Hughes},\ and\ \citenamefont
  {Ruangsri}}]{Flanagan:2012kg}%
  \BibitemOpen
  \bibfield  {author} {\bibinfo {author} {\bibfnamefont {Eanna~E.}\
  \bibnamefont {Flanagan}}, \bibinfo {author} {\bibfnamefont {Scott~A.}\
  \bibnamefont {Hughes}}, \ and\ \bibinfo {author} {\bibfnamefont {Uchupol}\
  \bibnamefont {Ruangsri}},\ }\bibfield  {title} {\enquote {\bibinfo {title}
  {{Resonantly enhanced and diminished strong-field gravitational-wave
  fluxes}},}\ }\href {\doibase 10.1103/PhysRevD.89.084028} {\bibfield
  {journal} {\bibinfo  {journal} {Phys. Rev. D}\ }\textbf {\bibinfo {volume}
  {89}},\ \bibinfo {pages} {084028} (\bibinfo {year} {2014})},\ \Eprint
  {http://arxiv.org/abs/1208.3906} {arXiv:1208.3906 [gr-qc]} \BibitemShut
  {NoStop}%
\bibitem [{\citenamefont {Isoyama}\ \emph {et~al.}(2013)\citenamefont
  {Isoyama}, \citenamefont {Fujita}, \citenamefont {Nakano}, \citenamefont
  {Sago},\ and\ \citenamefont {Tanaka}}]{Isoyama:2013yor}%
  \BibitemOpen
  \bibfield  {author} {\bibinfo {author} {\bibfnamefont {Soichiro}\
  \bibnamefont {Isoyama}}, \bibinfo {author} {\bibfnamefont {Ryuichi}\
  \bibnamefont {Fujita}}, \bibinfo {author} {\bibfnamefont {Hiroyuki}\
  \bibnamefont {Nakano}}, \bibinfo {author} {\bibfnamefont {Norichika}\
  \bibnamefont {Sago}}, \ and\ \bibinfo {author} {\bibfnamefont {Takahiro}\
  \bibnamefont {Tanaka}},\ }\bibfield  {title} {\enquote {\bibinfo {title}
  {{Evolution of the Carter constant for resonant inspirals into a Kerr black
  hole: I. The scalar case}},}\ }\href {\doibase 10.1093/ptep/ptt034}
  {\bibfield  {journal} {\bibinfo  {journal} {Prog. Theor. Exp. Phys.}\
  }\textbf {\bibinfo {volume} {2013}},\ \bibinfo {pages} {063E01} (\bibinfo
  {year} {2013})},\ \Eprint {http://arxiv.org/abs/1302.4035} {arXiv:1302.4035
  [gr-qc]} \BibitemShut {NoStop}%
\bibitem [{\citenamefont {van~de
  Meent}(2014{\natexlab{a}})}]{vandeMeent:2013sza}%
  \BibitemOpen
  \bibfield  {author} {\bibinfo {author} {\bibfnamefont {Maarten}\ \bibnamefont
  {van~de Meent}},\ }\bibfield  {title} {\enquote {\bibinfo {title}
  {{Conditions for Sustained Orbital Resonances in Extreme Mass Ratio
  Inspirals}},}\ }\href {\doibase 10.1103/PhysRevD.89.084033} {\bibfield
  {journal} {\bibinfo  {journal} {Phys. Rev. D}\ }\textbf {\bibinfo {volume}
  {89}},\ \bibinfo {pages} {084033} (\bibinfo {year} {2014}{\natexlab{a}})},\
  \Eprint {http://arxiv.org/abs/1311.4457} {arXiv:1311.4457 [gr-qc]}
  \BibitemShut {NoStop}%
\bibitem [{\citenamefont {van~de
  Meent}(2014{\natexlab{b}})}]{vandeMeent:2014raa}%
  \BibitemOpen
  \bibfield  {author} {\bibinfo {author} {\bibfnamefont {Maarten}\ \bibnamefont
  {van~de Meent}},\ }\bibfield  {title} {\enquote {\bibinfo {title}
  {{Resonantly enhanced kicks from equatorial small mass-ratio inspirals}},}\
  }\href {\doibase 10.1103/PhysRevD.90.044027} {\bibfield  {journal} {\bibinfo
  {journal} {Phys. Rev. D}\ }\textbf {\bibinfo {volume} {90}},\ \bibinfo
  {pages} {044027} (\bibinfo {year} {2014}{\natexlab{b}})},\ \Eprint
  {http://arxiv.org/abs/1406.2594} {arXiv:1406.2594 [gr-qc]} \BibitemShut
  {NoStop}%
\end{thebibliography}%

\end{document}